\newcommand{\V}{\mathcal{V}}
\newcommand{\E}{\mathcal{E}}
\newtheorem{definition}{Definition}
\newtheorem{conjecture}{Conjecture}
\newtheorem{assumption}{Assumption}
\newtheorem{proposition}{Proposition}
\newtheorem{remark}{Remark}
\newcommand{\vx}{\boldsymbol{x}}
\newcommand{\vy}{\boldsymbol{y}}
\renewcommand{\eqref}[1]{Eq.~(\ref{#1})}
\newcommand{\real}{\mathbb{R}}
\newcommand{\integernonnegative}{\mathbb{Z}_{\ge 0}}
\newcommand{\Exp}{\mathds{E}} 
\def\Exp{\mathbb{E}}
\title{\LARGE \bf
Title
}
	\title{Optimal policy design for decision problems under social influence}
	\author{Valentina Breschi,  Chiara Ravazzi,  Paolo Frasca,   Fabrizio Dabbene, and Mara Tanelli	\thanks{V. Breschi is with the Department of Electrical Engineering of Eindhoven University of Technology, P.O. Box 513, 5600 MB, Eindhoven, The Netherlands. (e-mail: v.breschi@tue.nl). \newline\indent
 M. Tanelli is with the Department of Electronics, Information and Bioengineering of the Politecnico di Milano, Piazza L. da Vinci, 32, 20133 Milano, Italy. (e-mail: mara.tanelli@polimi.it). \newline\indent C. Ravazzi and F. Dabbene are with the Institute of Electronics, Computer and Telecommunication Engineering, National Research Council of Italy (CNR-IEIIT), c/o Politecnico di Torino, Corso Duca Degli Abruzzi, 10129, Torino, Italy. (e-mail: name.surname@ieiit.cnr.it).	\newline\indent	P. Frasca is with Univ.\ Grenoble Alpes, CNRS, Inria, Grenoble INP, GIPSA-lab, F-38000 Grenoble, France and is Research Associate at the IEIIT-CNR, Torino, Italy. E-mail:paolo.frasca@gipsa-lab.fr 
 \newline \indent
 This work is partially supported by PRIN project TECHIE: “A control and network-based approach for fostering the adoption of new technologies in the ecological transition” Cod. 2022KPHA24 CUP: D53D23001320006.
  }}
\begin{document}

\maketitle
\thispagestyle{empty}
\pagestyle{empty}

\begin{abstract}
This paper focuses on describing the impact of policy actions on individuals' opinions in the presence of social and external influences toward proposing preliminary nudging strategies to achieve a cost-effectiveness trade-off. To this end, we extend the classical Friedkin and Johnsen model of opinion dynamics to incorporate random factors, such as variability in individual predispositions due to uncontrolled events (e.g., modeling the impact of the weather on daily mobility choices), and describe the impact of personalized policies. Furthermore, we formulate an optimal control problem aimed at fostering the social acceptance of particular actions/choices within the network. Through our analysis and numerical simulations, we illustrate the features of the proposed model in the absence of nudging and the effectiveness of the proposed (optimal) nudging strategies.
\end{abstract}
\section{Introduction}
\label{sec:intro}
Individuals frequently make repeated choices that affect their daily life, e.g. selecting a mode of transport to commute or deciding whether or not to use a certain service. These choices are generally influenced by $(i)$ the social environment individuals are immersed in \cite{friedkin1986formal} and $(ii)$ a resistance (a \emph{stubbornness}) to change, deriving from a limited predisposition toward specific actions due to rational or psychological factors \cite{friedkin1990social}. A clear example of such resistance can be found in the adoption of new technologies, such as, e.g., solar panels, that often encounter doubt or resistance from potential users, despite their positive impacts on society and the environment in the long run. 

At the same time, stakeholders and policymakers can enact tailor-made policies to encourage specific virtuous actions, e.g., facilitate a widespread adoption of green mobility services. To be successful, these  policies should consider the complex relationships between individuals, their characteristics and connections, and the consequences of past policy decisions \cite{breschi2022driving,breschi2022fostering}. While the works in \cite{breschi2022driving,breschi2022fostering} focus on designing and investigating the impact of nudging policies in a deterministic setting, some recent works have started investigating the influence of recommending systems (and, thus, slightly different \textquotedblleft policymakers\textquotedblright \ with respect to the ones considered in \cite{breschi2022driving,breschi2022fostering}) on the dynamics within social networks. Building on\footnote{It introduced a feedback interaction between a recommending system and a single user.} \cite{rossi2022closed}, \cite{castro2018opinion,goyal2019maintaining} have integrated the dynamics of social interactions and recommending system within influence networks. Meanwhile, the paper \cite{sprenger2024control} introduces a model-based control approach for user engagement maximization, based on a Model Predictive Control (MPC) strategy that takes into account the evolution of opinions.

\textbf{Goal.} In this paper, our goal is to develop new strategies to effectively leverage control mechanisms within repetitive decision-making scenarios to nudge individuals toward a target choice (e.g., promoting the use of bike sharing).

\textbf{Contribution.}  To achieve the previous goal, we initially propose a new model based on the classical Friedkin and Johnsen model of opinion dynamics \cite{friedkin1990social}. While leveraging the ability of this well-established opinion dynamics model to capture the intricate interplay between individual behaviors and social influences, we extend it by incorporating two additional external factors. First, we include the impact of tailor-made policies enacted (and controlled) by policymakers and/or stakeholders. Compared to \cite{sprenger2024control}, the policymaker is thus not intended as a new node introduced in the network, able to influence at each time individuals with an equal action, but is modeled as a personalized external influence.  Second, we include random factors that allow us to characterize phenomena such as the variability of predisposition and the random nature of social acceptance towards specific actions a policymaker seeks to promote. In the absence of external control action, the resulting individuals' opinions do not converge but persistently oscillate over time, as exemplified in a simple numerical example. Nonetheless, under suitable assumptions, we prove that the stochastic process representing the evolution of the individuals' opinion over time converges almost surely to a final limit and is ergodic. Therefore, over a sufficiently long period, the time averages of individuals' inclinations converge to their expected values.

Based on the proposed extension of the Friedkin and Johnsen model, we then formulate an optimal control problem for the design of personalized policies to foster the propagation and acceptance of a target choice within the network. When proposing our policy design strategy, unlike prior literature, we focus on the (realistic) case in which the policymaker is restricted to observing only the realizations of individuals' opinions. This choice leads us to propose two alternative approximations of the initial optimal control problem, resulting in quadratic losses that are standard in optimal and model predictive control (MPC) frameworks. The performance of all the proposed strategies is extensively analyzed over a (simple) numerical case study, showcasing their respective downsides and benefits. 

\textbf{Organization.} The paper is structured as follows. Section~\ref{sec:model} introduces the proposed model for opinion dynamics along with its open-loop properties. Section~\ref{sec:control} shifts to a closed-loop setting, introducing a suite of approximated policy design strategies intended to nudge individuals toward positive biases with respect to a target choice to be made. The impact of these policies is analyzed in Section~\ref{sec:numerical} by a numerical example. The paper is ended with some remarks and directions for future work. 

\textbf{Notation and useful definitions.} The set of positive integers (including zero) is denoted as $\mathbb{Z}_{\geq 0}$. Given a set $\mathcal{A} \subseteq \mathbb{R}^{n}$, its cardinality is indicated as $|\cal{A}|$. Vectors and matrices are indicated with bold symbols, i.e., $\boldsymbol{a} \in \mathbb{R}^{n}$ and $\boldsymbol{A} \in \mathbb{R}^{n \times m}$. The $i$-th element of a vector $\boldsymbol{a} \in \mathbb{R}^{n}$ is indicated as $a_{i}$. A diagonal matrix with elements equal (and equally ordered) to those in $\boldsymbol{A}$ is denoted as $[\boldsymbol{A}]$. Given a random vector $\boldsymbol{a} \in \mathbb{R}^{n}$, its expected value is denoted as $\mathbb{E}[\boldsymbol{a}]=\bar{\boldsymbol{a}}$. A random process $\boldsymbol{a}(t)$, for $t\in \mathbb{Z}_{\geq 0}$, is said to be ergodic if it satisfies the following definition.
\begin{definition}[Ergodic process]\label{def:ergodicity}
Given a random process $\{\boldsymbol{a}(t)\}_{t \in \mathbb{Z}_{\geq 0}}$, if there exists a random variable $\boldsymbol a_{\infty} \in \real^n $ such that 
\begin{equation}\label{eq:ergodicity}
\lim_{t\to \infty}\frac{1}{t}\sum_{\ell=0}^{t-1}\boldsymbol a({\ell})=\Exp[\boldsymbol a_{\infty}],
\end{equation} 
almost surely, then the process $\{\boldsymbol a(t)\}_{t\in\integernonnegative}$ will be said {\em ergodic}.
\end{definition}
The time average in~\eqref{eq:ergodicity} is called Ces\'aro average.

\section{The Model \& its free evolution}
\label{sec:model}
Consider a social network characterized by a directed graph $ \mathcal{G} =(\mathcal{V},\mathcal{E},\boldsymbol{P})$, whose nodes $v\in \mathcal{V}$ represent the individuals (or agents) in the network, the edges $\mathcal{E}$ indicate if they interact or not\footnote{If $(v,w)\in\E$ we say that $v$ and $w$ influence each others.}, while interpersonal influences are encoded in a weight matrix $\boldsymbol P$. Such a matrix is adapted to the graph, i.e.,
\begin{subequations}
\begin{equation}
    P_{v,w}\neq 0\iff (v,w)\in\E,
\end{equation}
and is here assumed to be stochastic and nonnegative, namely
\begin{equation}
    \sum_{w\in\V}P_{vw}=1,~~\forall v\in\cal{V}.
\end{equation}
Through this matrix, we can then denote the set of agents followed by $v \in \cal{V}$ as
\begin{equation}
    \mathcal{N}_v=\{w\in\V: P_{vw}>0\}.
\end{equation}
\end{subequations}

Each agent in $\mathcal{G}$ is endowed with a state $x_v(t)\in [0,1]$, $\forall v\in\V$, that represents the latent \emph{opinion/belief} at each time $t \in \mathbb{Z}_{\geq 0}$ about a specific (target) decision, e.g., renting a shared bike to commute. Specifically, a value $x_v(t)$ close to $0$ indicates that the $v$-th agent is not inclined to make such a target choice, while the closer the opinion to $1$, the more the individual's attitude is positive to the target decision. 

Similarly to \cite{friedkin1990social}, here we propose to model the evolution of the agents' opinion according to $(i)$ interactions with the neighbors according to the features of $\cal{V}$, i.e., \emph{social influence}, and $(ii)$ the individual biases $u_{\mathrm{o},v} \in [0,1]$, with $v \in \cal{V}$. Nonetheless, differently from \cite{friedkin1990social}, we additionally model the impact of \emph{external factors} that can shape individual choices. In particular, we model the impact of $(iii)$ \emph{controlled policies} $u_{v}^{\mathrm{c}}(t) \in \mathcal{U}_{v} \subset [0,1]$, for $v \in \cal{V}$, possibly tailor-made by a policy maker to meet individual's needs and nudge a change in the individual inclination (e.g., through incentive strategies, information campaigns), and $(iv)$ \emph{uncontrollable disturbances} $u_{v}^{\mathrm{nc}}(t) \in \Phi \subseteq [0,1]$, with $v \in \mathcal{V}$, representing the effect of possible (yet unpredictable) on individual opinions (e.g., changes in the weather that might influence one's choice of taking a shared bike). Due to the unpredictability of this last component, throughout the paper, we assume the following.
\begin{assumption}[Unpredictable external inputs]\label{ass:external_factors}
The external inputs $u_{v}^{\mathrm{nc}}(t)$, with $t \in \mathbb{Z}_{\geq 0}$, are realizations of a \emph{zero-mean, white} sequence, uniformly distributed in the interval $\delta[-u_{\mathrm{o},v},u_{\mathrm{o},v}]$, with $0<\delta\ll 1$, for all $v \in\cal{V}$.   
\end{assumption}
Note that Assumption~\ref{ass:external_factors} implies that uncontrollable external factors cannot ultimately \emph{overcome} the initial bias $u_{\mathrm{o},v}$, but they can only cause slight changes in the individual opinion. Accordingly, the update of agents' opinions over time is modeled through the following difference equation:
\begin{subequations}\label{eq:dyn}
\begin{equation}
    \boldsymbol{x}(t+1)=[\boldsymbol{\lambda}]\boldsymbol{P}\boldsymbol{x}(t)+(\boldsymbol{I}-[\boldsymbol{\lambda}])\boldsymbol{u}(t),~~t \in \mathbb{Z}_{\geq 0},
\end{equation}
where $\boldsymbol{x}(t) \in \mathbb{R}^{|\mathcal{V}|}$ groups the inclinations of all agents in the network, 
\begin{equation}\label{eq:overall_input}
    \boldsymbol{u}(t)=\boldsymbol{u}_{\mathrm{o}}+\boldsymbol{u^{\mathrm{c}}}(t)+\boldsymbol{u^{\mathrm{nc}}}(t),~~t \in \mathbb{Z}_{\geq 0},
\end{equation}
\end{subequations}
and $[\boldsymbol{\lambda}] \in \mathbb{R}^{|\cal{V}| \times |\cal{V}|}$ is a diagonal matrix, whose diagonal elements $\lambda_{vv}=\lambda_{v} \in [0,1]$ are the relative weight of social influence. 

From now on, to further characterize our model, we make two additional assumptions. The following hypothesis is made on the topology of the influence network associated with $\boldsymbol P$.
\begin{assumption}[Network topology]\label{ass:P}
For any node $v \in\mathcal{V}$, there exists a path from it to a node $s\in\mathcal{V}$ such that $\lambda_s<1$. 
\end{assumption}
\begin{remark}[Heterogeneous updates] 
The model \eqref{eq:dyn} assumes homogeneous state updates, i.e., randomness is introduced exclusively from the fact that agents react in a random manner based on the latent opinion. In reality, nonetheless, individuals can also interact in a random manner. This random behavior, whose impact we aim to explore in future works, can be included in the weight matrix $\boldsymbol{P}$.    
\end{remark}
Meanwhile, a second assumption is made on what can be measured, e.g., by a policymaker. Indeed, it is unlikely that one can directly access the individuals' opinions $\boldsymbol{x}(t)$. Instead, it is more realistic to have access to a binary \emph{acceptance variable} $\boldsymbol{y}(t) \in \{0,1\}^{|\mathcal{V}|}$, indicating whether the target choice has been accepted or not, assumed to be linked to $\boldsymbol{x}(t)$ as follows. 

\begin{assumption}[Acceptance variables]\label{ass:Y}
Given the initial condition $x_v(0)\in[0, 1]$, the acceptance variables $y_v(t)\in\{0, 1\}$ are random variables with conditional distribution
\begin{align*}&\mathbb{P}(y_v(t)\!=\!1|\boldsymbol{x}(t))\!=\!x_v(t),~~\mathbb{P}(y_v(t)\!=\!0|\boldsymbol{x}(t))\!=\!1\!-\!x_v(t),
\end{align*}
for all $t \in \mathbb{Z}_{\geq 0}$ and $v \in \cal{V}$.
\end{assumption}

Due to the uncertainty induced by uncontrollable external factors, both the latent opinion $\boldsymbol{x}(t)$ and the acceptance variable $\boldsymbol{y}(t)$ continue to oscillate over time.

\subsection{Asymptotic behavior in the absence of external control}
We now briefly characterize the evolution of the proposed model in \eqref{eq:dyn} assuming that no control action is applied to nudge individuals toward the target choice, i.e., for $\boldsymbol{u}^{\mathrm{c}}(t)=0$ for all $t\in\mathbb{Z}_{\geq 0}$. In this scenario, as a consequence of~\eqref{eq:dyn}, the evolution of the expected dynamics converges to a final limit profile characterized by the following result.
\begin{proposition}[Expected opinion dynamics]\label{prop:convergence}
Let Assumptions~\ref{ass:P}-\ref{ass:Y} be satisfied. Then, for every initial condition $ \boldsymbol{x}(0)\in [0,1]^{\mathcal{V}}$, the dynamics in \eqref{eq:dyn} is such that 
\begin{subequations}
    \begin{equation}
        \mathbb{E}[\boldsymbol{x}(t+1)]=[\boldsymbol{\lambda}]\boldsymbol P\mathbb{E}[\boldsymbol{x}(t)]+(\boldsymbol{I}-[\boldsymbol{\lambda}])\boldsymbol{u}_{\mathrm{o}},~~t\in\mathbb{Z}_{\geq 0},
    \end{equation}
    and 
    \begin{equation}
\label{eq:equilibria}
\boldsymbol{x^{\star}}=\mathbb{E}[{\boldsymbol x}(\infty)]=(I-[\boldsymbol\lambda]\boldsymbol P)^{-1}(I-[\boldsymbol\lambda])\boldsymbol{u}_{\mathrm{o}}.
\end{equation} \hfill $\square$
\end{subequations}		
\end{proposition}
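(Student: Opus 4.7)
The plan is to proceed in three short steps. First, I would take expectation of both sides of~\eqref{eq:dyn} specialized to $\boldsymbol{u}^{\mathrm{c}}(t)\equiv \boldsymbol{0}$. Because $[\boldsymbol{\lambda}]\boldsymbol{P}$, $\boldsymbol{I}-[\boldsymbol{\lambda}]$ and $\boldsymbol{u}_{\mathrm{o}}$ are deterministic and, by Assumption~\ref{ass:external_factors}, $\mathbb{E}[\boldsymbol{u}^{\mathrm{nc}}(t)]=\boldsymbol{0}$ for all $t\in\mathbb{Z}_{\geq 0}$, linearity of expectation yields at once the first recursion in the statement of Proposition~\ref{prop:convergence}.

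Second, and this is the main technical step, I would show that the matrix $\boldsymbol{M}:=[\boldsymbol{\lambda}]\boldsymbol{P}$ is Schur stable, i.e.\ $\rho(\boldsymbol{M})<1$. Since $\boldsymbol{P}$ is row--stochastic and the diagonal entries of $[\boldsymbol{\lambda}]$ lie in $[0,1]$, $\boldsymbol{M}$ is substochastic with $(\boldsymbol{M}\boldsymbol{1})_v=\lambda_v\le 1$. Iterating this identity, one obtains
\[
(\boldsymbol{M}^{k}\boldsymbol{1})_v \;=\; \lambda_v\,\mathbb{E}_v\!\left[\prod_{j=1}^{k-1}\lambda_{W_j}\right],
\]
where $(W_j)_{j\ge 0}$ is the random walk on $\mathcal{G}$ driven by $\boldsymbol{P}$ with $W_0=v$. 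Assumption~\ref{ass:P} guarantees that, for every $v\in\mathcal{V}$, this walk has strictly positive probability to visit, within at most $|\mathcal{V}|$ steps, some node $s$ with $\lambda_s<1$. Hence for $k^{\star}=|\mathcal{V}|$ the above expectation is strictly less than $1$ componentwise, and finiteness of $\mathcal{V}$ gives $\|\boldsymbol{M}^{k^{\star}}\|_{\infty}<1$, so that $\rho(\boldsymbol{M})<1$ and $(\boldsymbol{I}-\boldsymbol{M})$ is nonsingular.

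Third, with Schur stability at hand, I would unroll the recursion as
\[
\mathbb{E}[\boldsymbol{x}(t)] \;=\; \boldsymbol{M}^{t}\boldsymbol{x}(0)+\sum_{k=0}^{t-1}\boldsymbol{M}^{k}(\boldsymbol{I}-[\boldsymbol{\lambda}])\boldsymbol{u}_{\mathrm{o}},
\]
and let $t\to\infty$. The first term vanishes because $\boldsymbol{M}^{t}\to 0$, while the Neumann series converges to $(\boldsymbol{I}-[\boldsymbol{\lambda}]\boldsymbol{P})^{-1}$, yielding exactly~\eqref{eq:equilibria}. The main obstacle, as anticipated, is converting the purely topological statement of Assumption~\ref{ass:P} into the quantitative contraction estimate $\|\boldsymbol{M}^{k^{\star}}\|_{\infty}<1$; everything else amounts to standard linear-algebraic bookkeeping.
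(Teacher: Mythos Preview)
Your proposal is correct and follows precisely the ``standard techniques'' the paper alludes to (the paper omits the proof entirely, citing \cite{frasca2013gossips}). The three steps---taking expectations using $\mathbb{E}[\boldsymbol{u}^{\mathrm{nc}}(t)]=\boldsymbol{0}$, establishing $\rho([\boldsymbol{\lambda}]\boldsymbol{P})<1$ from Assumption~\ref{ass:P} via the substochastic/random-walk argument, and summing the Neumann series---are exactly what is needed, and your identification of the second step as the only nontrivial point is accurate.
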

This result can be proven straightforwardly using standard techniques (see \cite{frasca2013gossips}) and, thus, we omit the details. Moreover, it is worth pointing out that Proposition~\ref{prop:convergence} implies that the social system is stable, resulting in the emergence of a final opinion limit profile shaped by interpersonal influences and their strength (through $\boldsymbol{P}$ and $[\boldsymbol{\lambda}]$) and by individual biases. 

Relying on Definition~\ref{def:ergodicity}, we further characterize the ergodicity of the latent and acceptance processes $\boldsymbol{x}(t)$ and $\boldsymbol{y}(t)$, with $t \in \mathbb{Z}_{\geq 0}$, as follows.
\begin{conjecture}[Ergodicity of latent and manifest opinions]
\label{thm:ergodic}
Consider the random processes $\{\boldsymbol x(t)\}_{t\in\integernonnegative}$ and $\{\boldsymbol y(t)\}_{t\in\integernonnegative}$, respectively defined through \eqref{eq:dyn} and Assumption~\ref{ass:Y}. Then, the following holds: 
\begin{enumerate}
\item for $t \rightarrow \infty$, $\boldsymbol x(t),\boldsymbol y(t)$ converge in distribution to two random variables $\boldsymbol x(\infty)$ and $\boldsymbol y(\infty)$;
\item the processes are {\em ergodic};
\item the limit random variables $\boldsymbol x(\infty)$ and $\boldsymbol y(\infty)$ satisfy
$$
\Exp[\boldsymbol x(\infty)]=\Exp[\boldsymbol y(\infty)]=\boldsymbol x^{\star}.
$$ 
\end{enumerate}
\end{conjecture}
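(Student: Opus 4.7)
Let $A := [\boldsymbol{\lambda}]\boldsymbol{P}$ and $B := \boldsymbol{I} - [\boldsymbol{\lambda}]$, so that in the open-loop regime \eqref{eq:dyn} reads $\boldsymbol{x}(t+1) = A \boldsymbol{x}(t) + B\boldsymbol{u}(t)$ with $\boldsymbol{u}(t) = \boldsymbol{u}_{\mathrm{o}} + \boldsymbol{u}^{\mathrm{nc}}(t)$. The first task is to show $\rho(A)<1$: the matrix $A$ is entrywise nonnegative and substochastic (row sums equal $\lambda_v\le 1$), while Assumption~\ref{ass:P} states that from every node a node $s$ with $\lambda_s<1$ is reachable. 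A standard argument on nonnegative matrices then yields $\rho(A)<1$, so $\|A^{\ell}\|$ decays geometrically and $(I-A)^{-1}=\sum_{\ell\ge 0}A^{\ell}$ is well defined.

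\textbf{Convergence in distribution.} Unrolling the recursion gives
\begin{equation*}
\boldsymbol{x}(t) = A^{t}\boldsymbol{x}(0)+\sum_{\ell=0}^{t-1}A^{t-1-\ell}B\bigl(\boldsymbol{u}_{\mathrm{o}}+\boldsymbol{u}^{\mathrm{nc}}(\ell)\bigr).
\end{equation*}
Since $\{\boldsymbol{u}^{\mathrm{nc}}(\ell)\}$ is i.i.d., this random vector has the same law as $A^{t}\boldsymbol{x}(0)+\sum_{\ell=0}^{t-1}A^{\ell}B\bigl(\boldsymbol{u}_{\mathrm{o}}+\boldsymbol{u}^{\mathrm{nc}}(\ell)\bigr)$; the latter converges almost surely because the series is absolutely convergent (using boundedness of $\boldsymbol{u}^{\mathrm{nc}}$ and geometric decay of $\|A^{\ell}\|$). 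Call its limit $\boldsymbol{x}(\infty)$; then $\boldsymbol{x}(t)\Rightarrow \boldsymbol{x}(\infty)$. Convergence in distribution of $\boldsymbol{y}(t)$ is then immediate from Assumption~\ref{ass:Y}, since the conditional law of $\boldsymbol{y}(t)$ given $\boldsymbol{x}(t)$ depends continuously on $\boldsymbol{x}(t)$.

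\textbf{Ergodicity.} For $\boldsymbol{x}$, writing out and exchanging the order of summation,
\begin{equation*}
\frac{1}{t}\sum_{\ell=0}^{t-1}\boldsymbol{x}(\ell) = \frac{1}{t}\sum_{\ell=0}^{t-1}A^{\ell}\boldsymbol{x}(0) + (I-A)^{-1}B\,\frac{1}{t}\sum_{k=0}^{t-2}\boldsymbol{u}(k) - R(t),
\end{equation*}
where $R(t) := \tfrac{1}{t}\sum_{k=0}^{t-2}(I-A)^{-1}A^{t-1-k}B\,\boldsymbol{u}(k)$ has norm $O(1/t)$ by geometric decay of $\|A^{\ell}\|$ and boundedness of $\boldsymbol{u}$. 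The first term is also $O(1/t)$, while the middle term converges almost surely to $(I-A)^{-1}B\boldsymbol{u}_{\mathrm{o}}=\boldsymbol{x}^{\star}$ by the strong law of large numbers applied to the i.i.d. sequence $\boldsymbol{u}^{\mathrm{nc}}$. For $\boldsymbol{y}$, the vector-valued sequence $\boldsymbol{M}(t):=\sum_{\ell=0}^{t-1}\bigl(\boldsymbol{y}(\ell)-\boldsymbol{x}(\ell)\bigr)$ is, under Assumption~\ref{ass:Y}, a martingale with uniformly bounded increments; a martingale SLLN gives $\boldsymbol{M}(t)/t\to 0$ almost surely, and combining with the ergodicity of $\boldsymbol{x}$ yields the ergodicity of $\boldsymbol{y}$ with the same limit.

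\textbf{Mean identification and the main obstacle.} Proposition~\ref{prop:convergence} identifies $\lim_{t}\Exp[\boldsymbol{x}(t)]=\boldsymbol{x}^{\star}$; since $\boldsymbol{x}(t)\Rightarrow \boldsymbol{x}(\infty)$ with values bounded in $[0,1]^{|\mathcal{V}|}$, uniform integrability gives $\Exp[\boldsymbol{x}(\infty)]=\boldsymbol{x}^{\star}$, and by the tower property together with Assumption~\ref{ass:Y}, $\Exp[\boldsymbol{y}(t)]=\Exp[\boldsymbol{x}(t)]$ for every $t$, so also $\Exp[\boldsymbol{y}(\infty)]=\boldsymbol{x}^{\star}$. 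The delicate step is the ergodicity of $\boldsymbol{y}$: the martingale argument tacitly requires that, conditional on the whole $\boldsymbol{x}$-trajectory, the samples $\{\boldsymbol{y}(\ell)\}_{\ell}$ be independent across time. This independence is natural but is not fully spelled out in Assumption~\ref{ass:Y}; making it explicit is what turns the sketch above into a complete proof.
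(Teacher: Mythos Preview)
The paper does not prove this statement: it is explicitly labeled a \emph{Conjecture}, and immediately after stating it the authors only remark that ``this result is not obvious, and the results derived in \cite{ravazzi2015ergodic} using iterated random functions cannot be directly applied [\ldots] because the random maps [\ldots] are not identically distributed at each time; instead, their distributions depend on the state.'' There is therefore no proof in the paper to compare your attempt against.

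Your sketch, by contrast, is essentially a complete argument in the open-loop regime. The key observation---that once $\boldsymbol{u}^{\mathrm{c}}\equiv 0$ the $\boldsymbol{x}$-dynamics is simply a Schur-stable linear recursion driven by an i.i.d.\ bounded sequence (Assumption~\ref{ass:external_factors})---means that the reversed-series construction for the limiting law and the strong law of large numbers on $\boldsymbol{u}^{\mathrm{nc}}$ handle items~1--3 for $\boldsymbol{x}$ directly, with no need for the iterated random functions machinery the authors had in mind. For $\boldsymbol{y}$, your martingale-SLLN reduction is the right idea and works; the only gap is precisely the one you already flag, namely that Assumption~\ref{ass:Y} specifies the conditional law of $\boldsymbol{y}(t)$ given $\boldsymbol{x}(t)$ but not the joint dependence of $\{\boldsymbol{y}(\ell)\}_{\ell}$ across time and on the driving noise. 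Under the natural reading (conditional independence given the $\boldsymbol{x}$-trajectory), the martingale difference property holds and your argument is complete.

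In short, the ``state-dependence'' obstacle the authors worry about concerns the joint $(\boldsymbol{x},\boldsymbol{y})$ evolution viewed as an iterated random function system; your decomposition into a linear $\boldsymbol{x}$-process plus a bounded-increment martingale correction for $\boldsymbol{y}$ bypasses that viewpoint entirely and, modulo the mild strengthening of Assumption~\ref{ass:Y} you identify, resolves what the paper leaves as a conjecture.
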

This result is not obvious, and the results derived in \cite{ravazzi2015ergodic} using iterated random functions cannot be directly applied. This is because the random maps, describing the evolution, are not identically distributed at each time; instead, their distributions depend on the state.

\subsection{An example of free opinion evolution}\label{sec:example_ol}
\begin{figure}[!tb]
  \centering
\includegraphics[scale=.4,trim=0cm 3cm 0cm 2cm,clip]{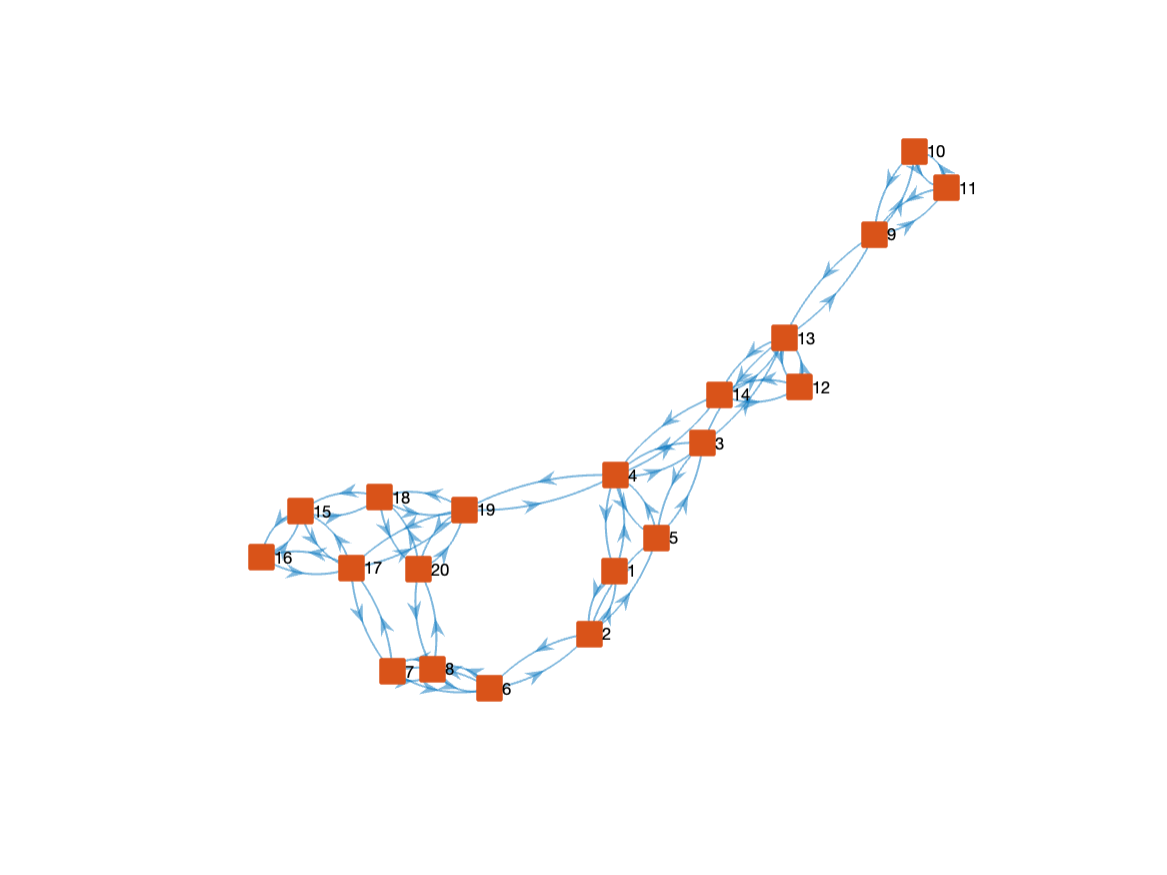} 
  \caption{The clustered social network with $20$ agents considered in our examples.}
  \label{fig:example}
\end{figure}
\begin{figure}[!tb]
\begin{center}
\includegraphics[width=0.99\columnwidth]{./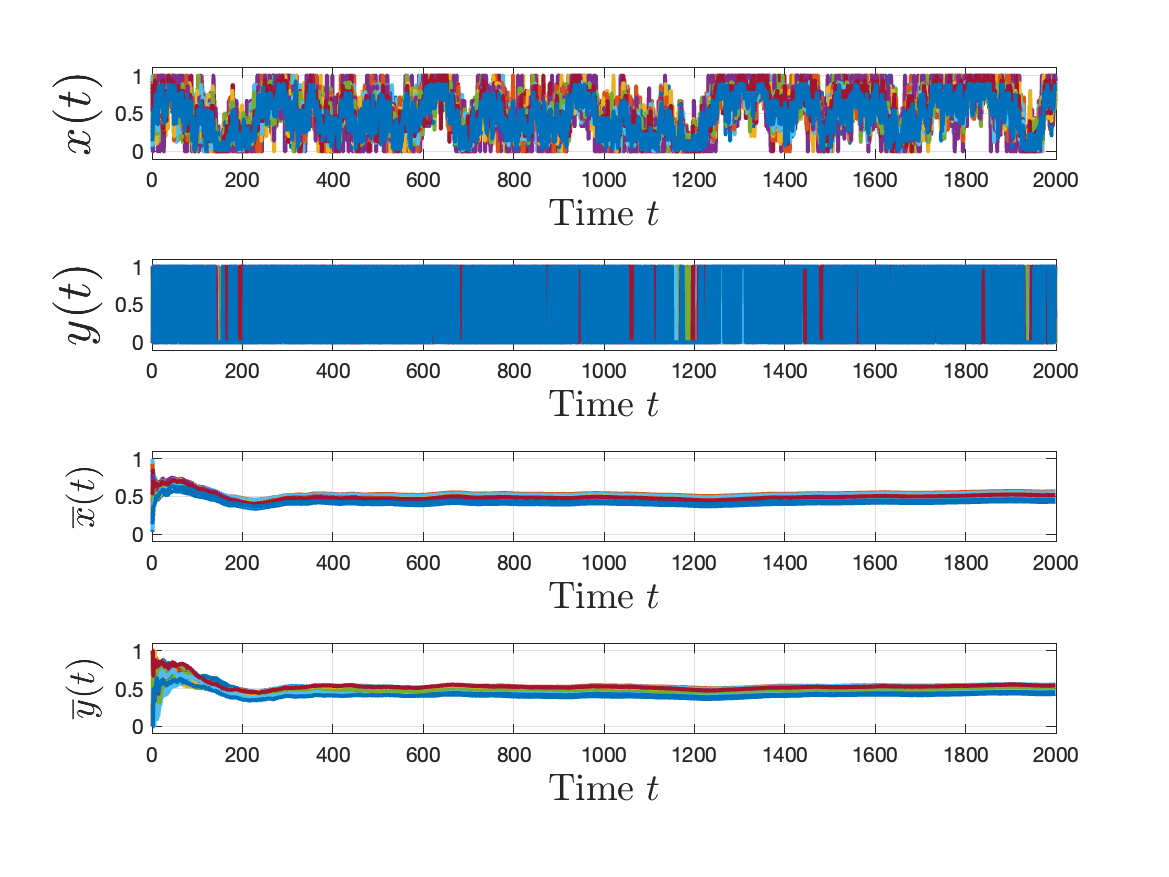}\vspace{-.6cm}
\caption{Opinion dynamics in the considered network of $20$ individuals.}\label{fig:ex1}
\end{center}
\end{figure}
To illustrate what the previous formal properties imply in practice, we consider the network consisting of $n=20$ nodes featuring $7$ clusters, shown in \figurename{~\ref{fig:example}}. The topology of this network is the realization of an Erd\"os Renyi graph with a connection probability $0.2$ and a probability of connections among nodes in different clusters equal to $0.7$. The initial inclination of the $20$ agents is drawn randomly from a uniform distribution in $[0,1]$, while the relative weights of social influence are chosen here to be equal for all agents, i.e., $\lambda_{v}=\lambda$ for all $v=1,\ldots,20$. Meanwhile, the initial acceptance variables $\boldsymbol{y}(0)$ are realizations of an i.i.d. Bernoulli with parameter $1/2$. \figurename{~\ref{fig:ex1}} show the evolution of the latent and manifest opinions of the $20$ agents, as well as their averages, over time. As expected based on our theoretical analysis, the two former random processes exhibit an oscillating behavior, while their averages tend to stabilize to a certain constant value after an initial transient.   

\section{Optimal design of policy interventions}\label{sec:control}
Based on the model introduced in Section~\ref{sec:model}, we now consider the problem of designing cost-efficient and effective policies to nudge individuals toward a target decision. To this end, given the stochastic nature of the latent and manifest opinions' evolutions, we focus on steering individuals toward the acceptance (at least on average) of a target choice by formulating the following optimal control problem
\begin{subequations}\label{eq:problem1}
    \begin{equation}
        \begin{aligned}
            & \min_{\boldsymbol{U^\mathrm{c}}}~~J_{\infty}(\boldsymbol{U^\mathrm{c}})\\
            &~~~ \mbox{s.t.}~~\eqref{eq:dyn},\\
            &\qquad~~~\boldsymbol{u^\mathrm{c}}(t) \in \mathcal{U},~~\forall t \in \mathbb{Z}_{\geq 0},\\ 
            &\qquad~~~\boldsymbol{x}(0)=\boldsymbol{x}_{\mathrm{o}}, 
        \end{aligned}
    \end{equation}
    where $\boldsymbol{x}_{\mathrm{o}} \in [0,1]^{|\cal{V}|}$ comprises the initial opinion of the agents, $\boldsymbol{U^\mathrm{c}}=\{\boldsymbol{u^\mathrm{c}}(t)\}_{t \in \mathbb{Z}_{\geq 0}}$ collects all controllable inputs, the constraint on the controlled input $\mathcal{U}$ is selected to guarantee that $\boldsymbol{u}(t)$ defined in \eqref{eq:overall_input} lays in $[0,1]$ (see Remark~\ref{rem:input_bounds1}), and the loss function is defined as
    \begin{equation}\label{eq:infinite_cost1}
    J_{\infty}(\boldsymbol{U^\mathrm{c}})=\sum_{t=0}^{+\infty} \mathbb{E}[\|\boldsymbol{1}-\boldsymbol{y}(t)\|_{2}^{2}]+r\|\boldsymbol{u}^\mathrm{c}(t)\|_{2}^{2},
\end{equation}
with $r >0$ being a penalty to be tuned by the policymaker to control the relative importance of a widespread acceptance of the target choice across the network and of keeping the policy effort at the minimum level.
\end{subequations}
Nonetheless, the cost function in \eqref{eq:infinite_cost1} can be alternatively rewritten as an \textquotedblleft economic\textquotedblright \ cost depending only on the controlled inputs $\boldsymbol{U^\mathrm{c}}$ and the average latent variables $\{\bar{\boldsymbol{x}}(t)\}_{t \in \mathbb{Z}_{\geq 0}}$, as detailed in the following proposition.   
\begin{proposition}\label{prop:change_cost}
The cost function in \eqref{eq:infinite_cost1} can be equivalently rewritten as  
\begin{equation}\label{eq:infinite_cost2}
    J_{\infty}(\boldsymbol{U^\mathrm{c}})=\sum_{t=0}^{+\infty} \|\boldsymbol{1}-\bar{\boldsymbol{x}}(t)]
    \|_{1}+r\|\boldsymbol{u}^\mathrm{c}(t)\|_{2}^{2},
\end{equation}
where $\bar{\boldsymbol{x}}(t)=\mathbb{E}[\boldsymbol{x}(t)]$.
\end{proposition}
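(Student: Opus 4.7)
The plan is to show that the expected squared norm in \eqref{eq:infinite_cost1} collapses to the $\ell_1$ norm in \eqref{eq:infinite_cost2} term-by-term, by exploiting the Bernoulli structure of the acceptance variables from Assumption~\ref{ass:Y}. Since the penalty $r\|\boldsymbol{u}^{\mathrm{c}}(t)\|_2^2$ is deterministic and identical in both expressions, it is enough to prove that
\begin{equation*}
\mathbb{E}\bigl[\|\boldsymbol{1}-\boldsymbol{y}(t)\|_2^2\bigr]=\|\boldsymbol{1}-\bar{\boldsymbol{x}}(t)\|_1
\end{equation*}
for every $t\in\mathbb{Z}_{\geq 0}$, and then sum over $t$ on both sides.

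The first step is to expand the squared norm componentwise, writing $\mathbb{E}[\|\boldsymbol{1}-\boldsymbol{y}(t)\|_2^2]=\sum_{v\in\mathcal{V}}\mathbb{E}[(1-y_v(t))^2]$. The key observation is that, by Assumption~\ref{ass:Y}, $y_v(t)\in\{0,1\}$, so $(1-y_v(t))^2=1-y_v(t)$ identically (both sides equal $1$ when $y_v(t)=0$ and $0$ when $y_v(t)=1$). Hence $\mathbb{E}[(1-y_v(t))^2]=1-\mathbb{E}[y_v(t)]$.

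The second step is to evaluate $\mathbb{E}[y_v(t)]$ via the tower property and Assumption~\ref{ass:Y}:
\begin{equation*}
\mathbb{E}[y_v(t)]=\mathbb{E}\bigl[\mathbb{E}[y_v(t)\mid\boldsymbol{x}(t)]\bigr]=\mathbb{E}[x_v(t)]=\bar{x}_v(t).
\end{equation*}
Substituting back yields $\mathbb{E}[(1-y_v(t))^2]=1-\bar{x}_v(t)$. Since $\bar{x}_v(t)\in[0,1]$ (because $x_v(t)$ takes values in $[0,1]$ by construction of the model in \eqref{eq:dyn} under the input constraints), the quantities $1-\bar{x}_v(t)$ are nonnegative, so $\sum_{v\in\mathcal{V}}(1-\bar{x}_v(t))=\|\boldsymbol{1}-\bar{\boldsymbol{x}}(t)\|_1$. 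Summing over $t$ and adding the control penalty recovers \eqref{eq:infinite_cost2}.

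There is essentially no obstacle here: the whole argument rests on the binary nature of $y_v(t)$ (which makes the squared error linear in $y_v(t)$) and on a single application of iterated expectations. The only caveat worth flagging is the implicit assumption that $\bar{x}_v(t)\in[0,1]$, which must be checked to ensure that the $\ell_1$ norm indeed equals the componentwise sum of $1-\bar{x}_v(t)$; this follows directly because $\boldsymbol{x}(t)$ is confined to $[0,1]^{|\mathcal{V}|}$ by the constraints imposed on $\mathcal{U}$ in \eqref{eq:problem1} (ensuring $\boldsymbol{u}(t)\in[0,1]^{|\mathcal{V}|}$) and by the fact that $[\boldsymbol{\lambda}]\boldsymbol{P}+(\boldsymbol{I}-[\boldsymbol{\lambda}])$ maps convex combinations of entries in $[0,1]$ into $[0,1]$.
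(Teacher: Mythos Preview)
Your proof is correct and, if anything, more economical than the paper's. The paper conditions on $\boldsymbol{x}(t)$ and then performs a bias--variance decomposition,
\[
\mathbb{E}\bigl[\|\boldsymbol{1}-\boldsymbol{y}(t)\|_2^2 \,\big|\, \boldsymbol{x}(t)\bigr]
=\|\boldsymbol{1}-\boldsymbol{x}(t)\|_2^2+\sum_{v\in\mathcal{V}} x_v(t)\bigl(1-x_v(t)\bigr),
\]
after which the two sums combine algebraically to $\sum_v (1-x_v(t))$, and a final outer expectation yields $\|\boldsymbol{1}-\bar{\boldsymbol{x}}(t)\|_1$. You instead bypass the decomposition entirely by invoking the Boolean identity $(1-y_v(t))^2=1-y_v(t)$ for $y_v(t)\in\{0,1\}$, which linearizes the loss in one step and lets a single application of the tower property finish the job. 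Both arguments are valid; yours is shorter and makes the role of the binary output more transparent, while the paper's route is the one that would generalize more readily if $y_v(t)$ were not $\{0,1\}$-valued but merely had conditional mean $x_v(t)$ and known conditional variance.
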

\begin{proof}
The proof can be found in Appendix~\ref{appendix:proof1}
\end{proof}
This result implies that the expected acceptance of the target choice can be maximized while minimizing the control effort by only looking at the expected latent variables $\bar{\boldsymbol{x}}(t)$ over time. This simplification is possible at the price of replacing a quadratic cost in $\boldsymbol{y}(t)$ with a 1-norm penalty in $\boldsymbol{1}-\bar{\boldsymbol{x}}(t)$ for all $t \in \mathbb{Z}_{\geq 0}$, ultimately shrinking the latent variables to full acceptance (i.e., $\boldsymbol{1}$). 

Despite the result proven in Proposition~
\ref{prop:change_cost}, we postpone the design of a policy with the economic loss in \eqref{eq:infinite_cost2} to future work. Instead, in this paper, we propose two approximations that allow us to get back to Quadratic Programs (QPs) for policy design. In introducing these approximations, we first assume that we can measure the expected latent variable, and then relax this (unrealistic) assumption.  

\begin{remark}[Input feasibility set]\label{rem:input_bounds1}
    In the absence of external (uncontrolled) inputs, the controlled input is guaranteed to satisfy $\boldsymbol{u}(t) \in [\boldsymbol{0},\boldsymbol{1}]$ for all $t \in \mathbb{Z}_{\geq 0}$ (see \eqref{eq:overall_input}) if $\mathcal{U}$ in \eqref{eq:problem1} corresponds to $[\boldsymbol{0},\boldsymbol{1}-\boldsymbol{u_{\mathrm{o}}}]$. Unfortunately, such a choice might not yield a feasible input when external disturbances influence the decision process. While we postpone the formulation of probabilistic input constraints to future works, from now, we consider the following (worst case) set of feasible inputs:
    \begin{equation}\label{eq:input_constraint}
        \mathcal{U}\!=\![\boldsymbol{0},\boldsymbol{1}\!-\boldsymbol{u_{\mathrm{o}}'}] \subset [\boldsymbol{0},\boldsymbol{1}\!-\boldsymbol{u_{\mathrm{o}}}],~~\mbox{with}~ \boldsymbol{u_{\mathrm{o}}'}=\boldsymbol{u_{\mathrm{o}}}+\sqrt{\frac{(\delta\boldsymbol{u}_{0})^{2}}{3}},
    \end{equation}
    namely, we shrink the set of feasible inputs considering the standard deviation of the external disturbances (according to Assumption~\ref{ass:external_factors}). 
\end{remark}
\begin{remark}[Nudging decisions \& ethical issues]
  Policies designed by solving \eqref{eq:problem1} or its reformulations proposed in the following subsection could be leveraged by a policymaker to promote virtuous behaviors, e.g., the use of shared mobility services to reduce pollution. At the same time, they are tightly linked to the policymaker's design choices and intent. Therefore, optimal control approaches like \eqref{eq:problem1} can also be misused. We thus wish to acknowledge the possible ethical issues consequent to the use of control techniques to nudge individual choices, which will be the subject of future investigations.
\end{remark}

\begin{remark}[Optimality over an infinite horizon]\label{remark:horizons}
Losses like the (equivalent) ones in \eqref{eq:infinite_cost1} and \eqref{eq:infinite_cost2} often do not represent something that a policymaker can and actually wants to minimize in practice. Indeed, solving policy design problems based on these losses leads to actions that are \textquotedblleft optimal\textquotedblright \ over an infinite horizon but disregard possible changes induced by not modeled, yet important, aspects, e.g., technological advances that can significantly change one's inclination if the target choice is whether or not to adopt a new technology. Actual policies will instead adapt to these factors, making it more reasonable to consider finite horizon costs and solve the design problem in a \emph{receding horizon fashion} (as we will do in the following Subsections).
\end{remark}

\subsection{Conservative policy design}
As a first strategy to simplify the economic loss in \eqref{eq:infinite_cost2}, we propose to replace it with its lower bound\footnote{The lower bound stems from the properties of norms and the fact that $\boldsymbol{1}-\bar{\boldsymbol{x}}(t) \in [0,1]^{|\cal{V}|}$.}
\begin{equation}\label{eq:cost_approx}
    \hat{J}_{\infty}(\boldsymbol{U^\mathrm{c}})=\sum_{t=0}^{+\infty} \|\mathbf{1}-\bar{\boldsymbol{x}}(t)\|_{2}^{2}+r\|\boldsymbol{u}^\mathrm{c}(t)\|_{2}^{2},
\end{equation}
hence shifting to a \emph{conservative} quadratic loss. Despite this simplification, as pointed out in Remark~\ref{remark:horizons}, looking at an infinite horizon loss might not be ideal when designing policies to nudge individual opinions. Therefore, we further shift from an infinite-horizon cost to a finite-horizon one. This choice allows us to formulate the following design problem on the mean opinion dynamics:
\begin{equation}\label{eq:conservative_MPC}
    \begin{aligned}
        & \min_{\boldsymbol{U^\mathrm{c}}_{T}} \hat{J}_{T}(\boldsymbol{U^\mathrm{c}_{T}})\\
        & ~~\mbox{s.t. }~\bar{\boldsymbol{x}}(k\!+\!1|t)\!=\![\boldsymbol{\lambda}]\boldsymbol{P}{\bar{\boldsymbol{x}}}(k|t)\!+\!(\boldsymbol{I}\!-\![\boldsymbol{\lambda}])(\boldsymbol{u}_{\mathrm{o}}\!+\!\boldsymbol{u}^\mathrm{c}(k|t)),\\
        & ~~\qquad {\boldsymbol{u}^\mathrm{c}}(k|t) \in \mathcal{U},~~~k \in [0,T-1],\\
        & ~~\qquad \bar{\boldsymbol{x}}(0|t)=\bar{\boldsymbol{x}}(t),\\
        & ~~\qquad \|\mathbf{1}\!-\!\bar{\boldsymbol{x}}(k\!+\!1|t)\|_{1} \!\leq \!\alpha \|\mathbf{1}\!-\!\bar{\boldsymbol{x}}(k|t)\|_{1},~k \!\in\! [0,T\!-\!1],
    \end{aligned}
\end{equation}
where $T \geq 1$  is the policy design horizon chosen by the policymaker, $\boldsymbol{U^\mathrm{c}}_{T}=\{\boldsymbol{u}^\mathrm{c}(k|t)\}_{k=0}^{T-1}$ is the set of optimal actions designed based on the mean inclinations comprised in $\bar{\boldsymbol{x}}(t)$, $\bar{\boldsymbol{x}}(k|t)$ is the predicted state according to \eqref{eq:dyn} and the initial condition dictated by $\bar{\boldsymbol{x}}(t)$, for $k \in [0,T-1]$. Meanwhile, the cost function is given by
\begin{equation}\label{eq:cost_1}
    \hat{J}_{T}(\boldsymbol{U^\mathrm{c}_{T}})=\sum_{k=0}^{T-1} \|\mathbf{1}-\bar{\boldsymbol{x}}(k|t)\|_{2}^{2}+r\|{\boldsymbol{u}}^\mathrm{c}(k|t)\|_{2}^{2},
\end{equation}
the constraint on the controlled input is dictated by $\mathcal{U}$ as defined in \eqref{eq:input_constraint}, and, for $\alpha<1$, the last (contraction) constraint in \eqref{eq:conservative_MPC} enforces the predictive tracking error to reduce at each time instant (similar to the one proposed in \cite{POLAK1993}). As also remarked in \ref{remark:horizons}, this problem should be solved in a receding-horizon fashion at each policy design step $t \in \mathbb{Z}_{\geq 0}$, keeping track and, hence, adapting to changes in the mean inclination of agents with respect to the target choice.

\subsection{An alternative loss to track opinion shifts}\label{sec:alternative_map}
To find an alternative to the previous approximation of the initial policy design problem (see \eqref{eq:problem1}), we now exploit the fact that if $\bar{\boldsymbol{x}}(t)$ \emph{converges}, i.e.,
$$
|\bar{x}_{v}(t+1) -\bar{x}_{v}(t)|\underset{t \rightarrow \infty}{\longrightarrow} 0,~~\forall v \in \mathcal{V},
$$
then the first term of the economic loss in \eqref{eq:infinite_cost1} can be approximated as
\begin{equation}
    \|1-\bar{\boldsymbol{x}}(t)\|_{1} \underset{t \rightarrow \infty}{\longrightarrow} \sum_{v \in V}\frac{(1-\bar{x}_{v}(t))^2}{|1-\bar{x}_{v}(t-1)|}.
\end{equation}
Relying on this result, we then recast the finite-horizon problem in \eqref{eq:conservative_MPC} as follows 
\begin{equation}\label{eq:TV_MPC}
    \begin{aligned}
        & \min_{\boldsymbol{U^\mathrm{c}}_{T}} \tilde{J}_{T}(\boldsymbol{U^\mathrm{c}_{T}})\\
        & ~~\mbox{s.t. }~\bar{\boldsymbol{x}}(k\!+\!1|t)\!=\![\boldsymbol{\lambda}]\boldsymbol{P}{\bar{\boldsymbol{x}}}(k|t)\!+\!(\boldsymbol{I}\!-\![\boldsymbol{\lambda}])(\boldsymbol{u}_{\mathrm{o}}\!+\!\boldsymbol{u}^\mathrm{c}(k|t)),\\
        & ~~\qquad {\boldsymbol{u}^\mathrm{c}}(k|t) \in \mathcal{U},~~~k \in [0,T-1],\\
        & ~~\qquad \bar{\boldsymbol{x}}(0|t)=\bar{\boldsymbol{x}}(t),\\
        & ~~\qquad \|\mathbf{1}\!-\!\bar{\boldsymbol{x}}(k\!+\!1|t)\|_{1} \!\leq \!\alpha \|\mathbf{1}\!-\!\bar{\boldsymbol{x}}(k|t)\|_{1},~k \!\in\! [0,T\!-\!1],
    \end{aligned}
\end{equation}
where
\begin{equation}\label{eq:cost_approx2}
    \tilde{J}_{T}(\boldsymbol{U^\mathrm{c}})=\sum_{k=0}^{T-1} \|\mathbf{1}-\bar{\boldsymbol{x}}(k|t)\|_{\boldsymbol{Q}(t+k)}^{2}+r\|{\boldsymbol{u}}^\mathrm{c}(k|t)\|_{2}^{2},
\end{equation}
where $\boldsymbol{Q}(t+k)$ is a diagonal matrix, with\footnote{To avoid numerical problems, the weight $Q_{v}(t+k)$ can be practically set to $Q_{v}(t+k)=\left(|1-\bar{x}_{v}(t+k-1)|+\varepsilon \right)^{-1}$, with $\varepsilon>0$ being a small constant value guarantee that the weights are well-posed.}
\begin{equation}\label{eq:original_weights}
Q_{v}(t+k)=\left(|1-\bar{x}_{v}(t\!+\!k\!-\!1)| \right)^{-1},~~k \in [0,T-1], 
\end{equation}
for all $t \in \mathbb{Z}_{\geq 0}$ and for all $v \in \mathcal{V}$. 

Clearly, this reformulation requires adjusting the penalties in the control cost over time based on the \textquotedblleft future\textquotedblright \ expected latent variables. On the one hand, this design choice would allow us to progressively recover the finite-horizon counterpart of the original economic cost if both the controlled input and the mean dynamics converge to a steady-state\footnote{Future work will be devoted to analyzing the conditions under which this result holds and, hence, understanding when solving \eqref{eq:TV_MPC} can be worthwhile.} (see Appendix~\ref{appendix:additional_details} for additional details). On the other hand, the dependence on the \textquotedblleft future\textquotedblright \ (and, thus, unknown at design time) expected latent variables makes the design problem more complex. Indeed, solving \eqref{eq:TV_MPC} would require a preview on $\bar{\boldsymbol{x}}(t+k)$, for $k=0,\ldots,T-1$. To circumvent this limitation, we propose to \textbf{do} the following at each time instant $t \in \mathbb{Z}_{\geq 0}$:
\begin{enumerate}
    \item Exploit the optimal sequence computed at the previous instant $t$ to construct the \textquotedblleft candidate\textquotedblright \ input sequence\footnote{At the first time step $\boldsymbol{U_{T}^\mathrm{c}}$ can be set to zero.} $$\boldsymbol{U_{T}^\mathrm{c}}=\{\boldsymbol{u}^{c,\star}(t|t-1),\ldots,\boldsymbol{u}^{c,\star}(t+T-1|t-1),0\},$$ which satisfies by construction the input constraints featured in .
    \item Use $\boldsymbol{U_{T}^\mathrm{c}}$ to predict the evolution of the mean inclination $\{\boldsymbol{\bar{x}}^{p}(t+k)\}_{k=1}^{T-1}$ through our mean inclination model, i.e.,
\begin{align*}
  \boldsymbol{\bar{x}}^{p}(t+k+1)&=[\boldsymbol{\lambda}]\boldsymbol{P}\boldsymbol{\bar{x}}^{p}(t+k)\\
  &~~~+(I-[\boldsymbol{\lambda}])(\boldsymbol{u}_{\mathrm{o}}+\boldsymbol{u}^{c,\star}(t+k|t-1)),
\end{align*}
for all $k \in [0,T-1]$, with $\boldsymbol{\bar{x}}^{p}(t)=\boldsymbol{\bar{x}}(t)$.
\item Approximate the weights in \eqref{eq:original_weights} as
\begin{equation}
    \hat{Q}_{v}(t+k)\!=\!(|1\!-\!\bar{x}_{v}^{p}(t+k-1)|)^{-1},~~\forall k \in [0,T-1],
\end{equation}
setting $\boldsymbol{\bar{x}}^{p}(t-1)+\boldsymbol{\bar{x}}(t-1)$.
\end{enumerate}  
\subsection{Policy design with estimated mean inclinations}
All the schemes proposed up to this moment rely on the exact knowledge of $\bar{\boldsymbol{x}}(t)$ for all $t \in \mathbb{Z}_{\geq 0}$. However, this quantity is generally not accessible to policymakers that, instead, can usually access the manifest opinion modeled here through $\boldsymbol{y}(t)$ (see Assumption~\ref{ass:Y}). These variables can nonetheless be used to define a practical estimate of the mean inclination over time as
\begin{equation}\label{eq:estimate}
    \hat{\bar{\boldsymbol{x}}}(t)=\frac{1}{t}\sum_{\tau=0}^{t-1} \vy(\tau).
\end{equation}
In fact, if the ergodicity of both the latent and manifest opinions proven in Proposition~\ref{prop:convergence} is preserved under the designed control action, then we expect that the Ces\'aro averages converge to their expected value, i.e.,
\begin{equation*}
    \mathbb{E}[\|\hat{\bar{\boldsymbol{x}}}(t)-\bar{\boldsymbol{x}}(t)\|_2^2]\underset{t \rightarrow \infty}{\longrightarrow}0.
\end{equation*}
This estimate is used to replace the actual value of the mean latent variable in the initial condition of both \eqref{eq:conservative_MPC} and \eqref{eq:TV_MPC}, leading to two problems that can be practically solved by a policymaker to design interventions to nudge individuals toward the target choice.  
\section{Numerical example}\label{sec:numerical}
We consider the same network introduced in Section~\ref{sec:example_ol} (see \figurename{~\ref{fig:example}}), still assuming $\lambda_v=\lambda$ for all $v \in \cal{V}$. Despite this simplification, we analyze two different cases,  $\lambda=0.25$ and $\lambda=0.75$, to consider two opposite situations. The former entails that the agents change their inclination mainly based on their bias and external interventions, while the latter implies that the main drivers of one's inclination are social influences. Apart from the initial conditions (they are drawn randomly from a uniform distribution in $[\boldsymbol{0},\boldsymbol{1}]$), the other source of differentiation between agents is represented by the values of $\boldsymbol{u}_{\mathrm{o}}$, for which we consider four different scenarios.
\begin{enumerate}
    \item Firstly, we consider $u_{\mathrm{o},v}=0.7$ for all $v \in \V$, which implies a pre-existing positive bias within the population.
    \item Then, we consider the case in which $10$ agents have $u_{\mathrm{o},v}=0.2$ and the remaining 10 have $u_{\mathrm{o},v}=0.8$. This implies that the population's bias is skewed, with half of it represented by what can be seen as less receptive agents. 
    \item As a third scenario, we consider a heterogeneous population, where individuals are generally positively biased. In this case, half of the agents (i.e., $10$) are characterized by $u_{\mathrm{o},v}=0.6$ and the remaining half has an initial bias of $0.8$.
    \item At last, we consider the completely opposite situation of an overall more receptive population with half of the agents having an initial bias of 0.2 and the remaining ones having $u_{\mathrm{o},v}=0.3$.
\end{enumerate}

We design and apply the proposed policies over a horizon of $T=30$ steps. To evaluate its outcome, we consider three quality indicators. As an empirical check of recursive feasibility, we evaluate (if any) the number of time instant $\tau^{\mathrm{ob}}$ in which the actual (inaccessible) inclinations exceed the interval $[\boldsymbol{0},\boldsymbol{1}]$. This indicator allows us to assess the benefit of the \emph{worst-case} approach adopted to cope with (unpredictable) external factors. As a second quality indicator, we consider the average number of adopters $\Gamma_{T}$ over time, i.e.,
\begin{equation}
\Gamma_{T}=\frac{1}{|\V|T}\sum_{t=0}^{T-1}\sum_{v \in V}y_{v}(T)\cdot 100, ~~[\%]
\end{equation}
which ultimately provides us with an indication of the \emph{effectiveness} of each policy strategy at the end of the considered horizon. An actual cost-benefit analysis cannot be carried out without considering an additional indicator, i.e.,
\begin{equation}
    \Delta u^\mathrm{c}_{T}=\sum_{t=0}^{T-1} \|\boldsymbol{u^\mathrm{c}}(t)\|_1,
\end{equation}
which provides insight into the overall effort required to enact a given policy. In the following, we denote the conservative policy obtained by solving the worst-case problem (see \eqref{eq:conservative_MPC}) as (WC), and the time-varying solution as (TV). In both these cases, the state is fully accessible. When estimating the state, we solely test the latter policy, indicating its outcomes and the associated performance indexes through (E-TV). To design all these policies\footnote{Future work will be devoted to analyzing the impact of the tuning parameters on policy behavior.}, we set $r=0.1$, $\alpha=0.99$ and $\delta=0.025$.  

\subsection{Policy impact in the noise-free case}
\begin{table*}[!tb]
\caption{Disturbance-free case: indicators for $\lambda=0.25$}\label{tab:025}
\centering
\begin{tabular}{ccccccc}
\multicolumn{1}{c}{} & $\Gamma_{T}$ (WC) & $\Gamma_{T}$ (TV) &  $\Gamma_{T}$ (E-TV) & $\Delta u^\mathrm{c}_{T}$ (WC) & $\Delta u^\mathrm{c}_{T}$ (TV) & $\Delta u^\mathrm{c}_{T}$ (E-TV)\\
\hline
Scenario 1 & \textbf{94.83}  \%& 87.24 \% & 88.10 \% & 5.29 & 3.70 &\textbf{3.54}\\
\hline
Scenario 2 & \textbf{92.76} \% & 81.72 \% & 81.38 \% & 8.78 & 6.67 & \textbf{6.43}\\
\hline
Scenario 3 & \textbf{93.62} \% & 85.34 \% & 88.10 \% & 5.29 & 3.78 & \textbf{3.72}\\
\hline
Scenario 4 & \textbf{88.96} \% & 76.90 \% & 79.48 \% & 13.23 & \textbf{10.06} & \textbf{10.06}\\
\hline
\end{tabular}
\end{table*}
\begin{table*}[!tb]
\caption{Disturbance-free case: indicators for $\lambda=0.75$}\label{tab:075}
\centering
\begin{tabular}{ccccccc}
\multicolumn{1}{c}{} & $\Gamma_{T}$  (WC) & $\Gamma_{T}$  (TV) &  $\Gamma_{T}$  (5-TV) & $\Delta u^\mathrm{c}_{T}$ (WC) & $\Delta u^\mathrm{c}_{T}$ (TV) & $\Delta u^\mathrm{c}_{T}$ (E-TV)\\
\hline
Scenario 1 & \textbf{91.38} \% &  88.10 \%& 90.17 \%& 5.30 & \textbf{4.66} &5.44\\
\hline
Scenario 2 & \textbf{89.13} \% & 86.03 \% & \textbf{89.13} \% & 8.64 & \textbf{7.99} & 8.75\\
\hline
Scenario 3 & \textbf{92.59} \% & 87.07 \% & 91.90 \% & 5.27 & \textbf{4.69} & 5.35\\
\hline
Scenario 4 & 88.79 \%& 87.93 \% & \textbf{88.97} \% & 13.19 & \textbf{ 12.71} & 13.73\\
\hline
\end{tabular}
\end{table*}
\begin{figure}[!tb]
    \begin{tabular}{cc}
        \hspace*{-.5cm} \subfigure[Hidden inclinations (WC)]{\includegraphics[scale=.525]{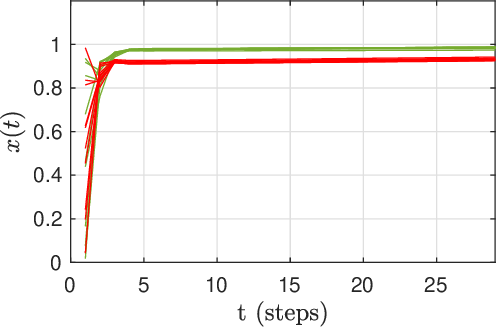}} \hspace*{-.3cm}&\hspace*{-.3cm} \subfigure[Policy actions (WC)]{\includegraphics[scale=.525]{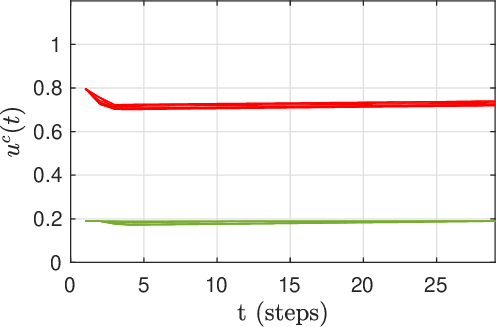}}\\
        \hspace*{-.5cm} \subfigure[Hidden inclinations (E-TV)]{\includegraphics[scale=.525]{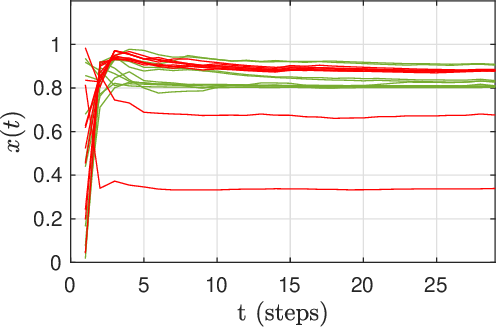}} \hspace*{-.3cm}&\hspace*{-.3cm} \subfigure[Policy actions (E-TV)]{\includegraphics[scale=.525]{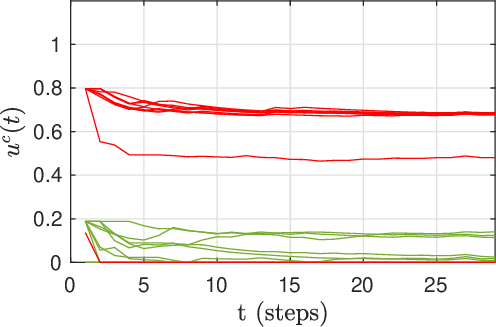}}
    \end{tabular}
    \caption{Disturbance-free case: states and inputs for the best-performing policies, i.e., (WC) and (E-TV) for scenario 2 and $\lambda=0.25$ over time. The green curves are associated with agents with positive bias ($u_{\mathrm{o},v}=0.8$), while the red ones are associated with agents having $u_{\mathrm{o},v}=0.2$.}
    \label{fig:025}
\end{figure}
\begin{figure}[!tb]
    \begin{tabular}{cc}
        \hspace*{-.5cm} \subfigure[Hidden inclinations (WC)]{\includegraphics[scale=.525]{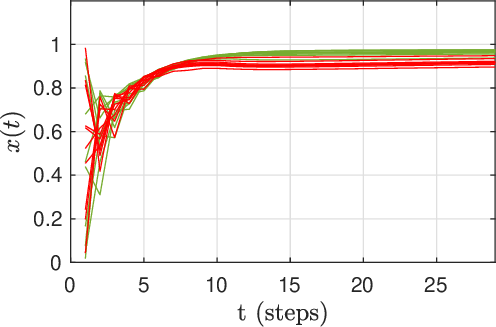}} \hspace*{-.3cm}&\hspace*{-.3cm} \subfigure[Policy actions (WC)]{\includegraphics[scale=.525]{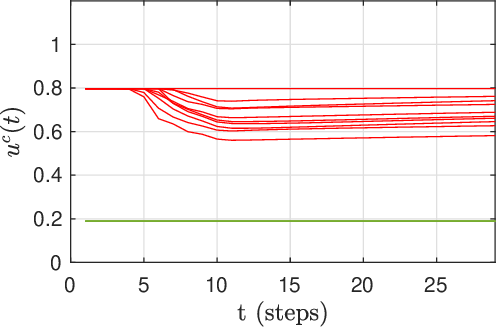}}\\
        \hspace*{-.5cm} \subfigure[Hidden inclinations (TV)]{\includegraphics[scale=.525]{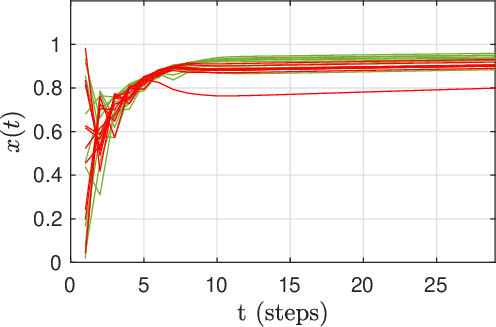}} \hspace*{-.3cm}&\hspace*{-.3cm} \subfigure[Policy actions (TV)]{\includegraphics[scale=.525]{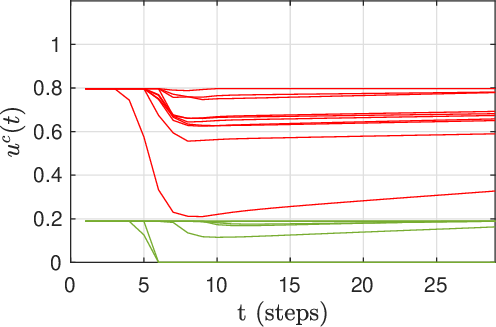}}
    \end{tabular}
    \caption{Disturbance-free case: states and inputs for the best-performing policies, i.e., (WC) and (TV) for scenario 2 and $\lambda=0.75$ over time. The green curves are associated with agents with positive bias ($u_{\mathrm{o},v}=0.8$), while the red ones are associated with agents having $u_{\mathrm{o},v}=0.2$.}
    \label{fig:075}
\end{figure}
Tables~\ref{tab:025}-\ref{tab:075} report the value of the indicators for all the considered scenarios. Independently on the value of $\lambda$, the worst-case conservative law obtained by solving \eqref{eq:conservative_MPC} is the one generally resulting in the most widespread diffusion of a positive inclination. This generally comes at the price of an increased policy cost. From this perspective, the policies with adapting weights tend to outperform the \textquotedblleft brute-force\textquotedblright \ approximation performed in \eqref{eq:cost_approx}. This confirms the capability of the latter policy to better adapt to changes in individual inclinations (despite the approximation performed in constructing the weights) and, hence, allow policymakers to save money. 

Both Tables~\ref{tab:025}-\ref{tab:075} show a general tendency of costs to increase when more agents tend to be receptive to the policy, while inherently negatively biased. As one would have guessed intuitively, containing investments in such a scenario might be counterproductive. Meanwhile, comparing the results in these tables, it can be noticed that all policies tend to be closer in performance and cost for a higher $\lambda$, showcasing the impact that different degrees in social connection can have on the final policy outcome and cost-benefit trade-off. At the same time, these results do not show any specific drop in performance, validating the use of the \textquotedblleft simple\textquotedblright \ estimate \eqref{eq:estimate} to reconstruct the (hidden) state, at least in this relatively simple example.  

Focusing on the \textquotedblleft skewed\textquotedblright \ scenario (i.e., scenario 2), \figurename~{\ref{fig:025}} and \figurename~{\ref{fig:075}} show the evolution of the hidden state (a.k.a., inclinations) of all agents over time, along with the policies enacted to achieve such result. From these plots, the difference between the two conditions on $\lambda$ and, hence, the effect of social interactions is clear. A higher $\lambda$ indeed promotes agents' inclination to evolve more cohesively, with few agents left behind at the price of a slower achievement of a widespread positive inclination. Also when looking at the policy actions, the changes required by different levels of social connections are clear. In particular, when $\lambda=0.75$, policy actions tend to be more varied (especially for negatively biased agents), indicating the need for policymakers to counteract the effect induced by social interaction and individual biases in a more dedicated way. Last but not least, the results obtained when exploiting the estimated state rather than the actual state (see the lower side of \figurename~{\ref{fig:025}}) are still consistent with the general trend observed when the true (but practically unknown state) is available. At the same time, small oscillations appear in the transient of both the state and the input, spotlighting the fact that approximate information is employed for policy design.  

\subsection{Policy impact with external disturbances}
\begin{table*}[!tb]
\centering
\caption{Noisy case: indicators for $\lambda=0.25$}\label{tab:025_2}
\begin{tabular}{ccccc}
\multicolumn{1}{c}{} & $\Gamma_{T}$ (E-WC) &  $\Gamma_{T}$ (E-TV) & $\Delta u^\mathrm{c}_{T}$ (E-WC) & $\Delta u^\mathrm{c}_{T}$ (E-TV)\\
\hline
Scenario 1 & \textbf{96.55} \% & 87.59 \%& 5.37 & \textbf{3.62}\\
\hline
Scenario 2 & \textbf{93.79} \% & 80.17 \%& 8.83 & \textbf{6.51}\\
\hline
Scenario 3 & \textbf{95.51} \% & 88.62 \%&  5.40 & \textbf{ 3.74}\\
\hline
Scenario 4 & \textbf{91.55} \% & 72.41 \%& 13.18 & \textbf{9.63}\\
\hline
\end{tabular}
\end{table*}
\begin{table*}[!tb]
\centering
\caption{Noisy case: indicators for $\lambda=0.75$}\label{tab:075_2}
\begin{tabular}{ccccc}
\multicolumn{1}{c}{} & $\Gamma_{T}$ (E-WC) &  $\Gamma_{T}$ (E-TV) & $\Delta u^\mathrm{c}_{T}$ (E-WC) & $\Delta u^\mathrm{c}_{T}$ (E-TV)\\
\hline
Scenario 1 & \textbf{92.24} \% & 91.38 \%& 5.58 & \textbf{5.38}\\
\hline
Scenario 2 & \textbf{90.69} \% & \textbf{90.68} \%& 9.26 & \textbf{8.85}\\
\hline
Scenario 3 & \textbf{93.10} \% & 90.52 \%&  5.60 & \textbf{5.46}\\
\hline
Scenario 4 & \textbf{90.52} \% & 90.52 \%& 13.66 & \textbf{13.43}\\
\hline
\end{tabular}
\end{table*}
\begin{figure}[!tb]
    \begin{tabular}{cc}
        \hspace*{-.5cm} \subfigure[Hidden inclinations (E-WC)]{\includegraphics[scale=.525]{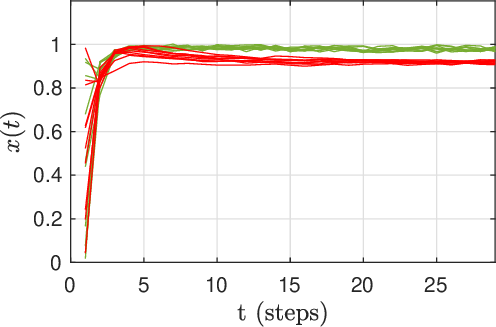}} \hspace*{-.3cm}&\hspace*{-.3cm} \subfigure[Policy actions (E-WC)]{\includegraphics[scale=.525]{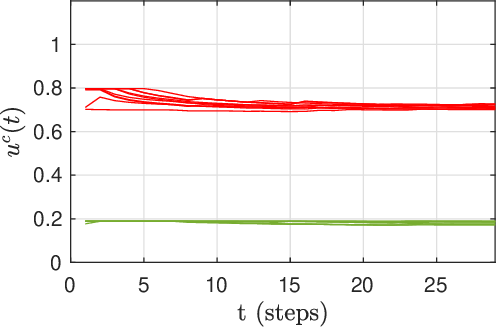}}\\
        \hspace*{-.5cm} \subfigure[Hidden inclinations (E-TV)]{\includegraphics[scale=.525]{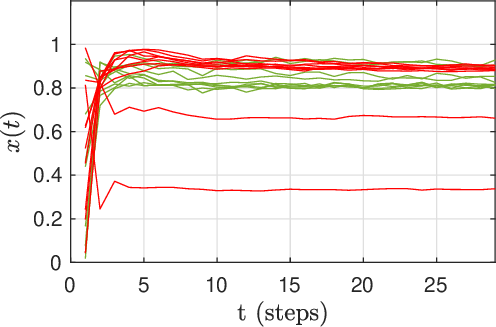}} \hspace*{-.3cm}&\hspace*{-.3cm} \subfigure[Policy actions (E-TV)]{\includegraphics[scale=.525]{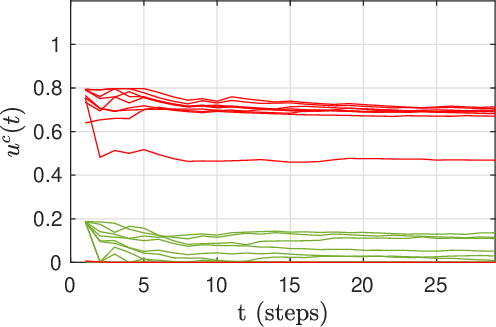}}
    \end{tabular}
    \caption{Noisy case: states and inputs for  (E-WC) and (E-TV) for scenario 2 and $\lambda=0.25$ over time. The green curves are associated with agents with positive bias ($u_{\mathrm{o},v}=0.8$), while the red ones are associated with agents having $u_{\mathrm{o},v}=0.2$.}
    \label{fig:025_2}
\end{figure}
\begin{figure}[!tb]
    \begin{tabular}{cc}
        \hspace*{-.5cm} \subfigure[Hidden inclinations (E-WC)]{\includegraphics[scale=.525]{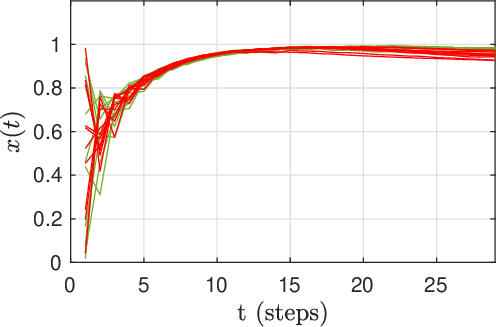}} \hspace*{-.3cm}&\hspace*{-.3cm} \subfigure[Policy actions (E-WC)]{\includegraphics[scale=.525]{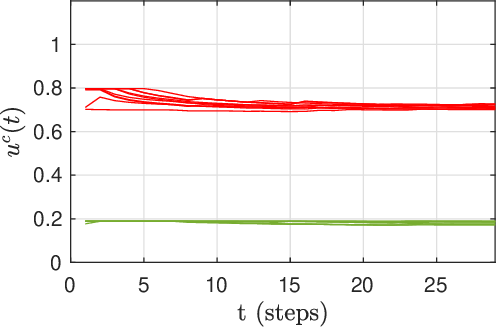}}\\
        \hspace*{-.5cm} \subfigure[Hidden inclinations (E-TV)]{\includegraphics[scale=.525]{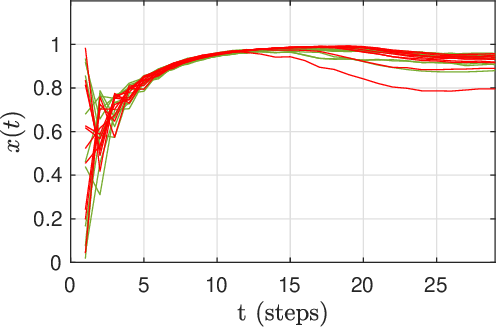}} \hspace*{-.3cm}&\hspace*{-.3cm} \subfigure[Policy actions (E-TV)]{\includegraphics[scale=.525]{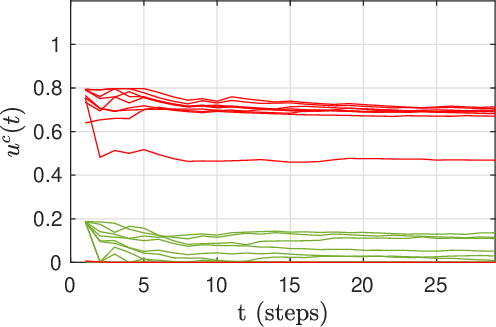}}
    \end{tabular}
    \caption{Noisy case: states and inputs for (E-WC) and (E-TV) for scenario 2 and $\lambda=0.75$ over time. The green curves are associated with agents with positive bias ($u_{\mathrm{o},v}=0.8$), while the red ones are associated with agents having $u_{\mathrm{o},v}=0.2$.}
    \label{fig:075_2}
\end{figure}
When considering the impact of external disturbances, we use the same performance criteria as before. Nonetheless, we focus only on the worst-case policy designed via the mean inclination estimate (E-WC) and the one obtained with such estimate and time-varying weights. The attained performance indicators are reported in Tables~\ref{tab:025_2}-\ref{tab:075_2} for both conditions on $\lambda$, showing that the trends evidenced in the absence of noise are confirmed when (unpredictable) external disturbances impact individual inclinations. At the same time, at least in our example, this shows that the estimate for the mean inclinations can represent an asset in the absence of direct insight into individual beliefs (which are anyway unrealistic in practice). 

Mimicking the analysis performed in the disturbance-free case, in \figurename~{\ref{fig:025_2}} and \figurename~{\ref{fig:075_2}} we show the (hidden) inclination and (controlled) input trajectories obtained for the two considered policies and the two scenarios on $\lambda$. Once again, the results we obtain are consistent with those achieved in the disturbance-free case, despite slightly deteriorated by the uncertainty on the initial condition of the receding horizon controllers and the presence of $\boldsymbol{u^\mathrm{nc}}$.

\section{Concluding Remarks}\label{sec:conclusions}
This work presented a model-based framework to analyze the impact of external interventions on the evolution of individual inclinations subject to social influences and the effects of unpredictable phenomena (i.e., changes in the weather forecast when deciding to use a shared bicycle). This formalization has allowed us to devise a set of optimal policy design strategies looking for a balance between nudging individuals toward a target choice and containing costs. Our numerical analysis spotlighted the differences, benefits, and drawbacks of the proposed strategies, setting the ground for further research into the use of these models for effective policy design in a social setting.

Future work will be devoted to proving the theoretical guarantees on the performance of the proposed policy design schemes, along with considering the full cost without approximations (hence, a stochastic policy design problem), and validating the approaches on real-world applications. 

\appendix
\subsection{Proof of Proposition~\ref{prop:change_cost}}\label{appendix:proof1}
Let us focus on the first term of the sum in equation \eqref{eq:infinite_cost1}, for which the following holds:
$$
\mathbb{E}[\|\boldsymbol{1}\!-\!\boldsymbol{y}(t)\|_{2}^{2}]= \mathbb{E}[\mathbb{E}[\|\boldsymbol{1}-\boldsymbol{y}(t)\|_{2}^{2}|\vx(t)]].
$$
Standard computations lead to the following series of equalities:
\begin{align}
     &\nonumber \mathbb{E}[\|\boldsymbol{1}-\boldsymbol{y}(t)\|_{2}^{2}|\vx(t)]\!\\
     & \nonumber\quad=\!\|\boldsymbol{1}\!-\!\mathbb{E}[\boldsymbol{y}(t)|\vx(t)]\|_{2}^{2}\!+\!\mathbb{E}[\|\boldsymbol{y}(t)\!-\!\mathbb{E}[\boldsymbol{y}(t)|\vx(t)]\|_{2}^{2}]\\
    \nonumber &\quad=\|\boldsymbol{1}-\boldsymbol{x}(t)\|_{2}^{2}\!+\!\mathbb{E}\left[\|\boldsymbol{y}(t)-\boldsymbol{x}(t)\|_{2}^{2}|\vx(t)\right]\\
    \nonumber &\quad=\|\boldsymbol{1}\!-\!\boldsymbol{x}(t)\|_{2}^{2}+\!\sum_{v \in \mathcal{V}}\mathbb{E}[(y_{v}(t)\!-\!x_{v}(t))^{2}|\vx(t)]\\
    \nonumber &\quad=\|\boldsymbol{1}\!-\!\boldsymbol{x}(t)\|_{2}^{2}\!+\!\sum_{v \in \mathcal{V}}x_{v}(t)(1\!-\!x_{v}(t))\\
    \nonumber &\quad=\sum_{v \in V}(1\!-\!x_{v}(t))^{2}\!+\!\sum_{v \in \mathcal{V}}x_{v}(t)(1\!-\!x_{v}(t))\\
    \nonumber&\quad=\sum_{v \in V} (1-x_{v}(t)).
\end{align}
By taking the expectation of this last term, we get
$$
\mathbb{E}[\|\boldsymbol{1}\!-\!\boldsymbol{y}(t)\|_{2}^{2}]=\!\|\mathbf{1}\!-\!\mathbb{E}[\boldsymbol{x}(t)]\|_{1},
$$
that, when replaced in \eqref{eq:infinite_cost1} concludes the proof.
\subsection{Further details on \eqref{eq:TV_MPC}}\label{appendix:additional_details}
As mentioned in Section~\ref{sec:alternative_map}, the approximation performed by considering the loss in \eqref{eq:cost_approx2} allows one to recover the (finite-horizon counterpart of the) loss in \eqref{eq:infinite_cost1} if both the controlled input and the mean dynamics converge to a steady state. Indeed, if the latter condition is verified, this implies  
$$
\boldsymbol{u}^{\mathrm{c}}(t) \underset{t \rightarrow \infty}{\longrightarrow} {\boldsymbol{u}}_{\mathrm{s}}^{\mathrm{c}}, 
$$
with ${\boldsymbol{u}}_{\mathrm{s}}^{\mathrm{c}}$ being the steady-state input, and, consequently,  
$$
\bar{\boldsymbol{x}}(t) \underset{t \rightarrow \infty}{\longrightarrow} \bar{\boldsymbol{x}}_{\mathrm{s}}.
$$
In turn, by continuity of the norm, the following hold
\begin{equation*}
   \|\boldsymbol{1}-\bar{\boldsymbol{x}}(t+k)\|^2_{\boldsymbol{Q}(t+k)}\rightarrow \|\boldsymbol{1}-\bar{\boldsymbol{x}}_{\mathrm{s}}\|^2_{\boldsymbol{Q}_{\mathrm{s}}}=\|\boldsymbol{1}-\bar{\boldsymbol{x}}_{\mathrm{s}}\|_1, 
\end{equation*}
with ${Q}_{\mathrm{s},v}=(|1-\bar{x}_{\mathrm{s},v}|)^{-1}$ for all $v \in \mathcal{V}$, supporting our claim. 

\bibliographystyle{IEEEtran}
\bibliography{main_social.bib}

\end{document}